\documentclass[journal,onecolumn]{IEEEtran}
\usepackage{amsmath,amsfonts,amssymb,amsthm,hyperref,enumerate,multicol,bm}
\usepackage{calligra,indentfirst,epsfig,lettrine}
\usepackage{caption}
\usepackage{relsize}
\usepackage{tabls}
\usepackage{array}\usepackage{makecell}
\usepackage{longtable}
\calligra

\DeclareFontFamily{U}{mathc}{}
\DeclareFontShape{U}{mathc}{m}{it}%
{<->s*[1.03] mathc10}{}

\DeclareMathAlphabet{\mathscr}{U}{mathc}{m}{it}

\usepackage{comment}
\makeatletter
\newcommand*{\rom}[1]{\expandafter\@slowromancap\romannumeral #1@}
\makeatother
\usepackage{mathtools}
\usepackage{graphicx}
\usepackage{subfigure}
\usepackage{xcolor}
\usepackage{color, soul}
\usepackage{mathrsfs}
\DeclareMathAlphabet{\mathpzc}{OT1}{pzc}{m}{it}

\usepackage{ upgreek }
\newcommand{\etal}{\textit{et al.}}

\newtheorem{definition}{Definition}
\newtheorem{corollary}{Corollary}
\newtheorem{theorem}{Theorem}
\newtheorem{example}{Example}
\newtheorem{lemma}{Lemma}
\newtheorem{remark}{Remark}

\allowdisplaybreaks
\usepackage[table]{xcolor}

\usepackage{multirow}

\begin{document}
\title{{Bounds on the Service Rates of Linear Codes and their Connections to Combinatorial Designs}}

 \author { Monika Yadav, Priyanka~Choudhary
       and~Maheshanand~Bhaintwal\\ 
       
       \thanks{The authors are with the Department of Mathematics, Indian Institute of Technology Roorkee, Roorkee 247667, India (e-mails: \href{mailto:ymonika121@gmail.com}{ymonika121@gmail.com}, \href{mailto:priyanka_c@ma.iitr.ac.in}{priyanka\_c@ma.iitr.ac.in}, 
   \href{mailto:maheshanand@ma.iitr.ac.in}{maheshanand@ma.iitr.ac.in}).}}

\date{}
\maketitle
\begin{abstract}

The service rate region of a distributed storage system describes the set of concurrent demand vectors that can be served without overloading any individual server, and is a fundamental measure of how effectively redundancy is used.  In this paper, we investigate aggregate service rates in distributed data storage systems based on linear codes over arbitrary finite fields. We first derive an upper bound on the aggregate service rate for an arbitrary subset of $t$ data objects in terms of the dual distance and the sizes of their smallest recovery sets. For systematic generator matrices, this bound simplifies to an expression depending only on the code length, the dual distance, and the cardinality of the selected subset.
This result extends existing bounds and provides a subset-wise formulation of the First Dual Distance Bound. 
We then establish conditions under which this upper bound is attained.
We apply our results to several important families of linear codes, including MDS codes, Reed–Muller codes, constacyclic codes, and extended Golay codes, and obtain exact values of the maximum aggregate service rates for the corresponding subsets of data objects.


\end{abstract}

\begin{IEEEkeywords}
   Coded distributed storage,  service rate region, recovery sets, linear codes, dual distance,  $t$-designs.
\end{IEEEkeywords}

\IEEEpeerreviewmaketitle

\section{Introduction}
The rapid growth in data generation and user demand has created a need for distributed storage systems that provide reliable and efficient access to stored information. Traditionally, data replication has been widely employed to improve reliability and availability in the presence of server failures. 
However, maintaining multiple copies of the same data incurs substantial storage overhead. Erasure coding provides a more storage-efficient alternative by introducing structured redundancy and distributing encoded data across multiple servers \cite{MAktas,Dimakis}. Besides protecting against server failures, such redundancy can also improve data availability when some servers are slow or temporarily unavailable. Consequently, considerable research has focused on improving the storage and repair efficiency of coded storage systems, including reducing redundancy, repair bandwidth, and the number of servers involved in data recovery \cite{Dimakis,Gardner,Huang}.

In practical storage systems, however, data objects may experience significantly different request rates. This non-uniformity in data popularity has motivated storage designs that employ replication, coding, or combinations of the two to accommodate heterogeneous demand patterns \cite{MAktas,Noori}. In addition to storage and repair efficiency, it is therefore important to understand how effectively a given storage scheme can serve incoming requests under the limited processing capacities of its servers. The analysis of data-access performance in coded systems is challenging because a request may be served through different recovery sets, leading to coupled server loads and complex queuing behavior.

The \emph{service rate region} (SRR) provides a useful framework for quantifying this request-serving capability. It consists of all request-rate vectors that can be simultaneously accommodated by the storage system without violating the service-capacity constraints of the individual servers \cite{Anderson, Aktas}. As a result, the SRR is closely related to load balancing and access balancing in distributed storage systems. In particular, for storage schemes operating with the same storage resources, the service rate region provides a means of comparing their ability to accommodate different demand patterns: a larger region corresponds to a larger collection of simultaneously supportable request-rate vectors. Thus, characterizing the service rate region is an important problem in the design and performance analysis of coded distributed storage systems.

Over the past decade, considerable attention has been devoted to studying the SRR of various classes of linear codes. In particular, significant progress has been made in characterizing and bounding the service rate regions associated with various code families, including MDS codes, Simplex codes, Reed–Muller codes, locally recoverable codes, cyclic codes, and Hamming codes \cite{MAktas,LRC, HamPC,PCM,Geo,LyH,LyHS}.

Kazemi {\etal}  \cite{Combit} introduced a graph-theoretic formulation in which recovery sets are represented by edges of a graph. Under this formulation, the allocation of request rates among recovery sets can be interpreted as a fractional matching problem, enabling the derivation of bounds on the service rate region. Further, Kazemi {\etal} \cite{Geo} investigated the problem from a finite-geometric viewpoint and exploited the geometric structure of Reed--Muller codes to describe their recovery sets and analyze the corresponding SRR.
An exact characterization of the service rate region can, however, become computationally difficult, particularly for codes of large length or with complicated recovery set structures. This has motivated the study of bounds on the maximum achievable service rates, which provide useful information about the request-serving capability of a coded storage system without requiring a complete description of its service rate region. In this direction, Ly and Soljanin \cite{Ly} derived bounds on the maximum achievable service rates of individual data symbols by exploiting the symmetry and incidence properties of combinatorial designs.
More recently, Ly and Soljanin \cite{LyH} studied SRR of storage systems based on a broad class of MDS codes. They characterized the axial intercepts of the SRR and determined the smallest standard simplex containing it, showing how the region changes with the number of systematic columns in the chosen generator matrix. 
Despite this progress, several fundamental questions concerning the request-serving capabilities of general linear codes remain open. In particular, for a given subset of data objects, the problem of determining the maximum aggregate request rate
that can be simultaneously supported while keeping the corresponding demand vector within the SRR has not been investigated. Establishing bounds on this quantity and identifying conditions under which such bounds are attained, therefore, constitute natural problems in the study of coded distributed storage systems.

The interplay between linear codes and combinatorial designs has been extensively studied in coding theory \cite{Assmus,Hu,Wang,Xiang,Cao}. In particular, the supports of minimum-weight codewords of the dual code give rise to minimal recovery sets in coded storage systems, establishing a natural connection between minimal recovery sets and combinatorial design theory. Ding and Tang \cite{Dingdesigns} provide a comprehensive treatment of designs arising from linear codes, including those supported by several important families of linear codes. In the context of near-MDS (NMDS) codes, Ding and Tang \cite{Ding} constructed infinite families of NMDS codes supporting infinite families of $2$-designs and $3$-designs. In particular, they presented a family of NMDS codes over $\mathbb{F}_{3^s}$, with $s\geq 2$, for which the supports of the minimum-weight codewords of both the codes and their duals form $3$-designs. Subsequently, Heng and Wang \cite{Heng} constructed several new infinite families of NMDS codes supporting $2$-designs and $3$-designs with parameters different from those previously known. These connections between linear codes and combinatorial designs are particularly relevant to the present work, as they enable us to exploit the design structure of minimum-weight dual codewords to study the attainability of bounds on aggregate service rates.

In the context of these developments, we investigate the maximum aggregate request rate that can be simultaneously supported for a given subset of data objects while ensuring that the corresponding demand vector remains within the SRR. 
The  paper is organized as follows. In Section \ref{Notation}, we introduce the necessary preliminaries and notation concerning the service rate region of coded distributed storage systems and $t$-designs. In Section \ref{secBound}, we first derive an upper bound on the aggregate service rate for an arbitrary subset of $t$ data objects in terms of the minimum distance $d^\perp$ of the dual code and the sizes of the smallest recovery sets associated with these data objects (Theorem \ref{thm1}). 
In the systematic case, this bound simplifies to a form depending only on the code length, the dual distance, and the cardinality of the selected subset (Theorem \ref{thm:Upperbound}).   We further establish sufficient conditions for the attainability of this upper bound in Theorem \ref{thm:design}. More precisely, let $\mathscr{C}$ be an $[n,k,d]$ linear code over $\mathbb{F}_q$ with a systematic generator matrix, and let $d^\perp$ denote the minimum distance of $\mathscr{C}^\perp$. For a subset $I\subseteq[k]$ of cardinality $t$ satisfying $t\leq\min\{d^\perp-1,n-d^\perp+1\},$ we show that if the supports of the minimum-weight codewords of $\mathscr{C}^\perp$, counted without repetition, form a $(t+1)$-design, then the upper bound on the aggregate service rate of the data objects indexed by $I$ is attained by an achievable demand vector. In Section \ref{app}, 
we apply the above achievability result to several important families of linear codes and obtain exact values of the maximum aggregate service rates for the corresponding subsets of data objects. For systematic MDS codes, the supports of the minimum-weight codewords of the dual code consist of all $(k+1)$-subsets and hence form a complete design. Consequently, our result applies to every subset $I\subseteq[k]$ of cardinality $t\leq k$ satisfying the required length condition and yields the exact aggregate service rate $t+(n-t)/k.$ This value itself is not new and also follows from the results of Aktaş {\etal} \cite{Aktas}; in particular, for $n\geq 2k,$ their water-filling algorithm attains this value. In the present setting, however, the condition can be relaxed to $n\geq k+t$ for the face of the SRR corresponding to a fixed subset $I$ of cardinality $t,$ thereby covering a range of code lengths for which the complete SRR is not known. For Reed–Muller codes, the relevant minimum-weight dual codewords are supported on affine flats in $\operatorname{AG}(m,2),$ whose design structure allows us to determine exact aggregate service rates for pairs of data objects. Similarly, for the extended ternary and binary Golay codes, the Steiner systems $S(5,6,12)$ and $S(5,8,24),$ respectively, allow the achievability result to be applied to subsets of up to four data objects. In each of these cases, the constructed allocation fully utilizes every server, confirming that the corresponding upper bound is attained.


\section{Preliminaries and Notation}\label{Notation}
In this section, we recall some fundamental concepts related to distributed storage systems and their associated service rate regions, together with relevant results from combinatorial design theory that will be used to establish our main results.

Let $\mathbb F_q$ denote the finite field of cardinality $q.$ A linear code $\mathscr{C}$ of length $n$ over $\mathbb{F}_q$ is an $\mathbb{F}_q$-linear subspace of $\mathbb{F}_q^n,$ where $n$ is a positive integer and $\mathbb{F}_q^n$ denotes the $n$-dimensional vector space of all $n$-tuples over $\mathbb{F}_q.$ The dimension $k$ of $\mathscr{C}$ is defined as its dimension as an $\mathbb{F}_q$-linear subspace of $\mathbb{F}_q^n.$ Elements of the code $\mathscr{C}$ are called codewords.   A generator matrix $G$ of $\mathscr{C}$  is a $k\times n$ matrix over $\mathbb{F}_q$ whose rows form a basis for $\mathscr{C}.$ Every codeword of $\mathscr{C}$ can be uniquely represented as $\mathbf{v}G$ for some $\mathbf{v}\in\mathbb{F}_q^k.$ Hence,  $\mathscr{C}=\{\mathbf{v}G~|~ \mathbf{v}\in\mathbb{F}_q^k\},
$ and $G$ completely determines the code $\mathscr{C}.$
The Hamming weight of a codeword $\mathbf{d}\in \mathscr{C},$ denoted by $w_H(\mathbf{d}),$ is the number of its nonzero components. The minimum Hamming distance of $\mathscr{C},$ denoted by $d(\mathscr{C}),$ is the minimum Hamming weight among all nonzero codewords of $\mathscr{C}.$ Throughout this paper, for a positive integer $m$, we use the notation $[m]$ to denote the set $\{1,2,\ldots,m\}.$ For any codeword $\mathbf{d}=(d_1,d_2,\ldots,d_n)\in \mathscr{C},$ its support is the set of indices corresponding to its nonzero coordinates, namely, $Supp(\mathbf{d})=\{i\in[n]~|~d_i\neq 0\}.
$
A linear code $\mathscr{C}$  over $\mathbb{F}_q$ having length $n,$ dimension $k$ and minimum Hamming distance $d$ will be called an $[n,k,d]$ linear code. Moreover, for each $\ell \in \{0,1,\ldots,n\},$ we denote by $W_\ell$ the number of codewords in $\mathscr{C}$ having Hamming weight $\ell.$
Associated with an $[n,k,d]$ linear code $\mathscr{C}$ is its dual code $\mathscr{C}^{\perp},$ consisting of all vectors in $\mathbb{F}_q^n$ that are orthogonal to every codeword of $\mathscr{C}$ with respect to the Euclidean inner product. More precisely, $$
\mathscr{C}^{\perp}
=\big\{\mathbf{b}\in\mathbb{F}_q^n~|~
\mathbf{b}\cdot\mathbf{d}=0\ \text{for every }\mathbf{d}\in\mathscr{C}\big\}.$$
The dual $\mathscr{C}^{\perp}$ is a linear code of length $n$ and dimension $n-k$ over $\mathbb{F}_q.$ We denote the minimum Hamming distance of $\mathscr{C}^{\perp},$ also referred to as the \emph{dual distance} of $\mathscr{C}$, by $d^{\perp}=d(\mathscr{C}^{\perp}).$ 

We now recall some basic concepts related to coded distributed storage systems and their associated SRRs. Let $G=[G^{(1)}\ G^{(2)}\ \cdots\ G^{(n)}]$ be a full-row rank $k\times n$  matrix over $\mathbb{F}_q$ with no zero columns, where $G^{(j)}$ denotes the $j$-th column of $G.$   Let $\mathscr{C}$ be the $[n,k,d]$ linear code over $\mathbb{F}_q$ generated by $G.$ Further,  a generator matrix $G$ is said to be in \emph{systematic form} if the $k\times k$ submatrix formed by its first $k$ columns is $I_k.$ 
We use the matrix $G$ to define a coded distributed storage system consisting of $n$ servers (or nodes), indexed by $[n],$ for storing $k\geq 2$ data objects. Throughout, we assume $k\le n,$ with uniform data-object sizes and equal storage capacities across all servers. Accordingly, the $k$ data objects are represented by a vector
$\mathbf{u}=(u_1,u_2,\ldots,u_k)\in\mathbb{F}_q^k.$
 The data are  encoded using $G,$ resulting in the vector
$\mathbf{u}G\in\mathbb{F}_q^n.$
For each $j\in[n],$ the $j$-th coordinate of $\mathbf{u} G,$ given by
$\mathbf{u}G^{(j)},$
is stored on server $j.$ Hence, the column $G^{(j)}$ specifies the linear combination of the original data objects stored on the $j$-th server. A server $j$ is called  \emph{systematic} (or  \emph{uncoded}) if $G^{(j)}$ is a standard basis vector of $\mathbb F_q^k;$ otherwise, it is called a coded server.

For each $i\in[k],$ let $\mathbf{e}_i$ denote the $i$-th standard basis vector of $\mathbb{F}_q^k.$ A subset $R\subseteq [n]$ is called a \emph{recovery set} for the data object $u_i$ if $u_i$ can be recovered from the contents of the servers indexed by $R.$ Equivalently, the standard basis vector $\mathbf{e}_i$ must lie in the $\mathbb F_q$-linear span of the columns of $G$ corresponding to the indices in $R.$ 
Further,
a recovery set $R$ for $u_i$ is said to be \emph{minimal} if no proper subset of $R$ is itself a recovery set for $u_i.$
Hence, a minimal recovery set contains no redundant server for the recovery of $u_i.$

For each $i\in[k],$ let
$\mathcal{R}_i(G)=\{R_{i,1},R_{i,2},\ldots,R_{i,t_i}\}$
denote the collection of all minimal recovery sets for the data object $u_i,$ where $t_i=|\mathcal{R}_i(G)|.$ The $k$-tuple $\mathcal{R}(G)=\big(\mathcal{R}_1(G),\mathcal{R}_2(G),\ldots,\mathcal{R}_k(G)\big)$ comprising the collections of minimal recovery sets for all $k$ data objects is referred to as the \emph{recovery $G$-system} corresponding to the minimal recovery sets for all $k$ data objects.
Each server has a limited rate at which it can process incoming requests. For $j\in[n]$, let $\delta_j\in\mathbb{R}_{\geq 0}$ denote the service capacity of server $j.$ The service capacities of all $n$ servers are represented by the vector
$\boldsymbol{\delta}=(\delta_1,\delta_2,\ldots,\delta_n)\in\mathbb{R}_{\geq0}^n.$
Throughout this paper, we assume that all servers have unit service capacity, i.e.,
$\delta_j=1,$ for $ j\in[n].$
Finally, let $\lambda_i\in\mathbb{R}_{\geq 0}$ denote the rate at which requests for the data object $u_i$ arrive at the storage system. The request rates for all $k$ data objects are collected in the demand vector
$\boldsymbol{\lambda}
=(\lambda_1,\lambda_2,\ldots,\lambda_k)
\in\mathbb{R}_{\geq0}^k.$

\begin{definition}\cite{alfarano2024service}
  Let $\mathcal{R}(\mathcal{G})=\big(\mathcal{R}_1(G),\mathcal{R}_2(G),\ldots,\mathcal{R}_k(G)\big)$ be the recovery $G$-system, corresponding to the minimal recovery sets for all $k$ data objects. For each $i\in[k]$ and $j\in[t_i],$ let $\lambda_{i,j}\geq 0$ denote the portion of the request rate for the data object $u_i$ that is served through the recovery set $R_{i,j}.$ A demand vector
$\boldsymbol{\lambda}=(\lambda_1,\lambda_2,\ldots,\lambda_k)\in\mathbb{R}_{\geq 0}^k$ 
is said to be \emph{achievable} if the incoming requests can be distributed among the corresponding recovery sets without exceeding the service capacity of any server. The \emph{service rate region} (SRR) associated with $G,$ denoted by $\Lambda\big(\mathcal{R}(G)\big),$ is the collection of all achievable demand vectors. Equivalently, $\boldsymbol{\lambda}\in\Lambda\big(\mathcal{R}(G)\big)$  if there exist nonnegative quantities $\lambda_{i,j}$ for $i\in[k]$ and $j\in[t_i],$ satisfying the following constraints:
  
  \begin{equation}\label{eq1}
    \sum\limits_{j=1}^{t_i}\lambda_{i,j}=\lambda_i
\text{~ for all }~ i\in [k],
\end{equation}
and \begin{equation}\label{eq2}
    \sum_{i=1}^{k}\sum\limits_{\substack{j=1\\ \ v\in R_{i,j}}}^{t_i}
\lambda_{i,j}
\le 1
\text{~ for all }~ v\in [n].
\end{equation}

 \end{definition} 
  
 Condition \eqref{eq1}  guarantees that the entire request rate for each data object is distributed among its corresponding recovery sets, whereas condition \eqref{eq2} ensures that no server is assigned a load exceeding its service capacity.

For each $I \subseteq [k],$ let us define 
$$\left(\sum_{i \in I}\lambda_{i}\right)^{*}=\max\left\{ \sum_{i \in I}\lambda_{i} ~\big |~ \boldsymbol{\lambda}= (\lambda_1,\lambda_2,\ldots,\lambda_k)\in\Lambda(\mathcal{R}(G))\right\}.$$

In other words, $\left(\sum_{i\in I}\lambda_i\right)^*$ characterizes the maximum combined request rate that the storage system can support for the data objects indexed by $I,$ under the feasibility constraints imposed by the SRR.
We refer to this quantity as the \emph{maximum achievable aggregate rate} for the set of data objects indexed by $I.$

We now recall some basic definitions and properties of $t$-designs that will be used to establish a connection between SRRs and combinatorial designs.
\begin{definition}\cite[Sec. 8.1]{Huffman}
 Let $n,$  $\omega$, and $t$ be positive integers with $1\leq \omega<n.$ A $t$-$(n,\omega,\mu_t)$ design is an incidence structure $\mathcal{X}=(U,\mathscr{B}),$ where $U$ is a set of $n$ points and $\mathscr{B}$ is a collection of $\omega$-element subsets of $U,$ referred to as blocks, such that every $t$-element subset of $U$ occurs in exactly $\mu_t$ blocks of $\mathscr{B}.$

\end{definition}

\begin{lemma}\label{lem2.1}\cite[Th. 8.1.3]{Huffman}
Suppose that $\mathcal{X}=(U,\mathscr{B})$ is a $t$-$(n,\omega,\mu_t)$ design. For any $s\in[t],$ $\mathcal{X}$ is also a $s$-$(n,\omega,\mu_s)$ design, where $\mu_s=\mu_t\cdot \frac{\binom{n-s}{t-s}}{\binom{\omega-s}{t-s}} .$
\end{lemma}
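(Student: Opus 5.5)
We prove the statement of Lemma~\ref{lem2.1} by double counting. Fix $s\in[t]$ and an arbitrary $s$-subset $S\subseteq U$. Let $\mu(S)$ denote the number of blocks of $\mathscr{B}$ containing $S$. The aim is to show that $\mu(S)$ does not depend on $S$ and equals $\mu_t\binom{n-s}{t-s}/\binom{\omega-s}{t-s}$.

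Consider the set of pairs
\[
\mathcal{P}_S=\{(T,B) : S\subseteq T\subseteq B,\ |T|=t,\ B\in\mathscr{B}\}.
\]
We count $|\mathcal{P}_S|$ in two ways.

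First, choose $T$. There are $\binom{n-s}{t-s}$ ways to extend $S$ to a $t$-subset $T$ of $U$, and since $\mathcal{X}$ is a $t$-design, each such $T$ lies in exactly $\mu_t$ blocks. Hence $|\mathcal{P}_S|=\mu_t\binom{n-s}{t-s}$.

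Second, choose $B$. There are $\mu(S)$ blocks $B$ containing $S$. For each of them, the number of $t$-subsets $T$ with $S\subseteq T\subseteq B$ is $\binom{\omega-s}{t-s}$, since $|B|=\omega$. Hence $|\mathcal{P}_S|=\mu(S)\binom{\omega-s}{t-s}$.

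Equating the two counts gives
\[
\mu(S)=\mu_t\,\frac{\binom{n-s}{t-s}}{\binom{\omega-s}{t-s}},
\]
which is independent of the choice of $S$. Therefore $\mathcal{X}$ is an $s$-$(n,\omega,\mu_s)$ design with the stated $\mu_s$.

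The only point that needs checking is that the denominator is nonzero, i.e., that $t\le\omega$. If $\mathscr{B}$ is nonempty and $\mu_t>0$, then every $t$-subset lies in some block, which forces $t\le\omega$. If instead $\mu_t=0$, then no block can contain a $t$-subset, and when $t\le \omega$ this forces $\mathscr{B}=\emptyset$, in which case the statement holds trivially with $\mu_s=0$. Under the standard convention $t\le\omega$, the displayed formula is well defined. This degenerate case is the only subtlety; the rest is routine.
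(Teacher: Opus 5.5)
Your double-counting argument over pairs $(T,B)$ with $S\subseteq T\subseteq B$ is correct and is the standard proof of this fact; the paper itself gives no proof and simply cites Huffman--Pless, Th.~8.1.3. Your remark that the formula implicitly needs $t\le\omega$ (automatic when $\mathscr{B}\neq\emptyset$ and $\mu_t>0$) is a sensible addition, since the paper's definition of a $t$-design does not state this condition.
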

\begin{lemma}\label{twoset}
    \cite[Theorem 9.7]{Stinson} Let $\mathcal{X}=(U, \mathscr{B})$ be a $t$-$(n,\omega,\mu_t)$ design. Let \(S,T\subseteq U\) be disjoint sets with $|S|=a,$ $ |T|=b$ and  $a+b\le t$. Then the number of blocks in \(\mathscr{B}\) that contain every point of $S$ and contain no point of $T$ is

$$
\mathcal{M}_{a}^b
=
\mu_t
\frac{\binom{n-a-b}{\omega-a}}
     {\binom{n-t}{\omega-t}}.$$

\end{lemma}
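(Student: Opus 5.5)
We prove the count $\mathcal{M}_a^b$ by inclusion--exclusion over subsets of $T$, using Lemma \ref{lem2.1} to supply the intermediate block counts. Then we show that the resulting alternating sum collapses to the stated closed form.

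\emph{Step 1 (blocks through a set of size at most $t$).} For $s\le t$, Lemma \ref{lem2.1} says that every $s$-subset of $U$ lies in exactly
\[
\mu_s=\mu_t\binom{n-s}{t-s}\Big/\binom{\omega-s}{t-s}
\]
blocks. Using the identity $\binom{n-s}{t-s}\binom{n-t}{\omega-t}=\binom{n-s}{\omega-s}\binom{\omega-s}{t-s}$ (both sides count the chains $A\subseteq B\subseteq U\setminus S_0$ with $|A|=t-s$ and $|B|=\omega-s$), we rewrite this as
\[
\mu_s=\mu_t\binom{n-s}{\omega-s}\Big/\binom{n-t}{\omega-t}.
\]
This is exactly the claimed formula in the case $b=0$, $a=s$.

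\emph{Step 2 (inclusion--exclusion over $T$).} Fix $S$ with $|S|=a$ and $T$ with $|T|=b$. For each $T'\subseteq T$, the set $S\cup T'$ has size $a+|T'|\le a+b\le t$. By Step 1 its blocks number $\mu_{a+|T'|}$, which depends only on $|T'|$. Inclusion--exclusion on the events "the block contains $x$", for $x\in T$, gives
\[
\mathcal{M}_a^b=\sum_{j=0}^{b}(-1)^j\binom{b}{j}\mu_{a+j}
=\frac{\mu_t}{\binom{n-t}{\omega-t}}\sum_{j=0}^{b}(-1)^j\binom{b}{j}\binom{n-a-j}{\omega-a-j}.
\]

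\emph{Step 3 (evaluating the sum).} It remains to show
\[
\sum_{j}(-1)^j\binom{b}{j}\binom{n-a-j}{\omega-a-j}=\binom{n-a-b}{\omega-a}.
\]
We argue combinatorially in a ground set of size $N=n-a$ containing a fixed $b$-set $T$, with $m=\omega-a$. The left side is the inclusion--exclusion count of $m$-subsets of the ground set that avoid $T$: for each $J\subseteq T$, the number of $m$-subsets containing $J$ is $\binom{N-j}{m-j}$. The right side counts the same $m$-subsets directly, as subsets of the remaining $N-b$ points. Alternatively, one can apply the Pascal recursion $\binom{N-j}{m-j}-\binom{N-j-1}{m-j-1}=\binom{N-j-1}{m-j}$ and induct on $b$.

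The only step needing care is Step 3, the alternating binomial identity. The combinatorial argument above handles it cleanly. Boundary cases (for instance $\omega-a<0$ terms) are covered by the convention that binomial coefficients with out-of-range arguments vanish.
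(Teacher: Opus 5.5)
Your proof is correct and complete. The paper gives no proof of this lemma; it simply cites Stinson. So the only benchmark is the standard textbook argument, which inducts on $b=|T|$. Removing a point $x$ from $T$ gives the recursion $\mathcal{M}_a^b=\mathcal{M}_a^{b-1}-\mathcal{M}_{a+1}^{b-1}$: count the blocks through $S$ avoiding $T\setminus\{x\}$, then subtract those that also contain $x$. The hypothesis still applies because $(a+1)+(b-1)\le t$. Pascal's rule $\binom{n-a-b+1}{\omega-a}-\binom{n-a-b}{\omega-a-1}=\binom{n-a-b}{\omega-a}$ then closes the induction, and the base case $b=0$ is exactly your Step~1.

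Your inclusion--exclusion over all $T'\subseteq T$ is this recursion unrolled, and your Step~3 identity is what the repeated Pascal steps telescope to, so the two routes are essentially equivalent. Yours invokes Lemma~\ref{lem2.1} once for all sizes $a+j$ and isolates a purely combinatorial identity, and your subset-counting proof of that identity is clean. The induction avoids alternating sums and needs only one Pascal step at a time.

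Two small boundary remarks. First, if $a=0$, your $j=0$ term is $\mu_0=|\mathscr{B}|$. Lemma~\ref{lem2.1} as stated in the paper covers only $s\in[t]$, so it does not literally give this. The same double count yields $|\mathscr{B}|=\mu_t\binom{n}{t}/\binom{\omega}{t}$, so this is harmless, and in the paper's applications $a\ge 1$ anyway. Second, the divisions in Step~1 tacitly use $t\le\omega$. The statement itself presupposes this, since otherwise $\binom{n-t}{\omega-t}=0$. Under that assumption no negative lower indices such as $\omega-a<0$ ever arise.
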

Throughout the paper, we use the notation introduced in this section.

\section{Service Rate Bounds for subsets of data objects}\label{secBound}
In this section, we investigate bounds on the aggregate service rate associated with a subset of data objects in a coded storage system. We first derive general upper bounds using the sizes of the smallest recovery sets and the minimum Hamming distance of the dual code. We then specialize these bounds to systematic generator matrices and identify conditions under which they are attained.

Let $i\in[k].$ A recovery set $\mathcal{O}$ for the data object $u_i$ is called a smallest recovery set if
$|\mathcal O|=\min\big\{|R|~|~ R\in\mathcal{R}_i(G)\big\}.$
That is, $\mathcal O$ has minimum cardinality among all the minimal recovery sets for the data object $u_i.$
The following lemma relates the size of any minimal recovery set of a data object to that of a smallest recovery set of that data object and the minimum distance of the dual code.  A similar result was established in \cite{Ly} for the binary case. For completeness, we present its extension to codes over arbitrary finite fields.
\begin{lemma}\label{lem:minRsize}
Let $\mathscr{C}$ be an $[n,k,d]$ linear code over $\mathbb{F}_q$ with a generator matrix $G.$ Consider a storage system encoded using the generator matrix $G.$ For $i\in [k],$ 
let $\mathcal{O}$ be a smallest recovery set for the data object $u_i$ with  $|\mathcal{O}|=a$. 
Then any minimal recovery set $R$ for $u_i,$  distinct from $\mathcal{O},$ satisfies
\begin{equation*}\label{eq:sharp-bound}
  |R|\;\ge\;d^{\perp}-a ,
\end{equation*}
and consequently
\begin{equation*}\label{eq:max-bound}
  |R|\;\ge\;\max\bigl\{\,a,\;d^{\perp}-a\,\bigr\}.
\end{equation*}
\end{lemma}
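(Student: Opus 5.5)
The plan is to turn the two recovery sets into a nonzero codeword of $\mathscr{C}^\perp$ supported inside $\mathcal{O}\cup R$, and then use the dual distance. Since $\mathcal{O}$ recovers $u_i$, we can write $\mathbf{e}_i=\sum_{j\in\mathcal{O}}\alpha_j G^{(j)}$. By minimality of $\mathcal{O}$ (a smallest recovery set is in particular minimal), every $\alpha_j$ with $j\in\mathcal{O}$ is nonzero: if some $\alpha_j$ vanished, $\mathcal{O}\setminus\{j\}$ would already recover $u_i$. Likewise $\mathbf{e}_i=\sum_{j\in R}\beta_j G^{(j)}$ with all $\beta_j\neq 0$ for $j\in R$. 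Extend $\boldsymbol{\alpha}$ and $\boldsymbol{\beta}$ by zeros to vectors in $\mathbb{F}_q^n$, and set $\mathbf{b}=\boldsymbol{\alpha}-\boldsymbol{\beta}$. Then $G\mathbf{b}^{T}=\mathbf{e}_i-\mathbf{e}_i=\mathbf{0}$. Since the rows of $G$ span $\mathscr{C}$, this gives $\mathbf{b}\in\mathscr{C}^{\perp}$.

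Next we check that $\mathbf{b}\neq\mathbf{0}$. Since $R\neq\mathcal{O}$, the supports $\mathrm{Supp}(\boldsymbol{\alpha})=\mathcal{O}$ and $\mathrm{Supp}(\boldsymbol{\beta})=R$ differ, so $\boldsymbol{\alpha}\neq\boldsymbol{\beta}$. This is exactly the point where minimality of both sets (all coefficients nonzero) is needed, and I expect it to be the only delicate step. Hence $\mathbf{b}$ is a nonzero codeword of $\mathscr{C}^{\perp}$, and so $w_H(\mathbf{b})\ge d^{\perp}$.

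On the other hand, $\mathrm{Supp}(\mathbf{b})\subseteq \mathcal{O}\cup R$, so
\[
d^{\perp}\le w_H(\mathbf{b})\le |\mathcal{O}\cup R|\le |\mathcal{O}|+|R| = a+|R|.
\]
This gives $|R|\ge d^{\perp}-a$. Finally, $R$ is itself a minimal recovery set for $u_i$, so by the definition of a smallest recovery set $|R|\ge a$. Combining the two bounds yields $|R|\ge\max\{a,d^{\perp}-a\}$.
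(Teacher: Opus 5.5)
Your proposal is correct and follows the paper's proof: subtract the two representations of $\mathbf{e}_i$ to get a nonzero codeword of $\mathscr{C}^\perp$ supported in $\mathcal{O}\cup R$, then compare its weight with $d^\perp$ and use $|R|\ge a$ from the choice of $\mathcal{O}$. Your argument that $\mathbf{b}\neq\mathbf{0}$, namely that $\mathrm{Supp}(\boldsymbol{\alpha})=\mathcal{O}\neq R=\mathrm{Supp}(\boldsymbol{\beta})$, is slightly more direct than the paper's appeal to neither set being properly contained in the other.
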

\begin{proof}  Since $\mathcal{O}$ is a minimal recovery set for the data object $u_i,$ we have $\sum\limits_{\ell\in \mathcal{O}}\alpha_{\ell}G^{(\ell)}=\mathbf{e}_i$ for some $\alpha_{\ell} \in \mathbb{F}_q^*$. 

Now, let $R\neq\mathcal{O}$ be any minimal recovery set for $u_i.$ Then, there exist coefficients $\beta_j\in\mathbb{F}_q^*$ for $j\in R,$ satisfying
$\sum\limits_{j\in R}\beta_jG^{(j)}=\mathbf{e}_i.$
Subtracting these two expressions gives
$$\sum\limits_{\ell\in\mathcal{O}}\alpha_\ell G^{(\ell)}
-\sum\limits_{j\in R}\beta_jG^{(j)}=0,$$
which can be rewritten as
$$\sum\limits_{\ell\in \mathcal{O}\setminus \mathcal{O}\cap R}\alpha_{\ell}G^{(\ell)} -\sum\limits_{j\in R\setminus \mathcal{O}\cap R}\beta_{j}G^{(j)} +\sum\limits_{b\in \mathcal{O}\cap R}(\alpha_b-\beta_b)G^{(b)}=0.$$
Thus, the above linear dependence among the columns of $G$ determines a codeword
$\mathbf{h}=(h_1,h_2,\ldots,h_n)\in\mathscr{C}^\perp,
$
where
$$h_j=\left\{ \begin{array}{cl}
    \alpha_j & \text{if } j \in \mathcal{O}\setminus \mathcal{O}\cap R;  \\
   -\beta_j  & \text{if } j \in R\setminus \mathcal{O} \cap R;\\
   \alpha_j-\beta_j & \text{if } j \in \mathcal{O}\cap R; \\0 & \text{otherwise.}
 \end{array}\right.$$
Since $R\neq\mathcal{O},$ and both $R$ and $\mathcal{O}$ are minimal recovery sets for $u_i,$  neither can be properly contained in the other. Hence, $\mathbf{h} \neq\mathbf{0}.$ By the minimality of $d^\perp,$
we have $w_H(\mathbf{h})\ge d^\perp.$
Furthermore,
$Supp(\mathbf{h})\subseteq\mathcal{O}\cup R,$
and therefore
$$d^\perp
\le w_H(\mathbf{h})
\le |\mathcal{O}\cup R|
=|\mathcal{O}|+|R|-|\mathcal{O}\cap R|.$$
Since $|\mathcal{O}|=a,$ it follows that
 $|R|\;\ge\;d^{\perp}-a .$
Further, $\mathcal{O}$ is a smallest recovery set for $u_i,$ and hence $|R|\ge |\mathcal{O}|=a.$
Combining the above two inequalities, we obtain
$|R|\ge \max\{a,d^\perp-a\},$
which proves the desired result. 
\end{proof}

\begin{remark}
If $|R|<d^{\perp},$ it is easy to see that   there exist unique coefficients $\beta_j\in\mathbb{F}_q^* $    for $j\in R,$ satisfying
$\sum\limits_{j\in R}\beta_jG^{(j)}=\mathbf{e}_i.$ 
\end{remark}

\begin{remark}
If $\mathcal O=\{i\}$ is the smallest recovery set for $u_i,$ then  $a=1$ and  every minimal recovery set $R\neq\{i\}$ satisfies $|R|\ge d^\perp-1.$
Moreover, this bound is attained whenever there exists a minimum Hamming weight codeword $\mathbf{h}\in\mathscr{C}^\perp$ whose support contains $i,$ since $Supp(\mathbf{h})\setminus\{i\}$ forms a recovery set for $u_i$ of size $d^\perp-1.$
Further, if $d^\perp\le 2a,$ then we have  $\max\{a,d^\perp-a\}=a.$
Thus, in this case, the bound reduces to $|R|\ge a,$ and the dual distance does not provide any improvement over the minimum-size condition imposed by $\mathcal{O}.$
\end{remark} 

We next extend the preceding recovery set analysis to   data objects belonging to a subset $I\subseteq[k] .$ In the following theorem, we consider smallest recovery sets for these data objects and derive an explicit bound on the maximum achievable sum of their service rates that can be supported by the storage system.

\begin{theorem}\label{thm1}
Let $\mathscr{C}$ be an $[n,k,d]$ linear code over $\mathbb{F}_q$ with a generator matrix $G.$ Consider a storage system based on encoding defined by $G.$  Let $ I\subseteq [k],$ and for $i\in I$, let
$\mathcal{O}_i$ be a smallest recovery set for data object $u_{i}.$  Let 
$|\mathcal{O}_i|=a_i$, and  define $b_i=\max\{a_i,d^{\perp}-a_i\}$, where $d^\perp$ denotes the minimum Hamming distance of $\mathscr{C}^\perp.$ Then the sum of achievable rates for the data objects indexed by $I$  satisfies
\begin{equation}\label{Ibound}
    \sum_{i\in I}\lambda_ib_i\leq n+\sum_{i\in I}(b_i-a_i).\end{equation}
\end{theorem}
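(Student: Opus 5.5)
We would prove the bound by a load-counting argument on an arbitrary achievable allocation. Fix $\boldsymbol{\lambda}\in\Lambda(\mathcal{R}(G))$ with nonnegative split rates $\lambda_{i,j}$ satisfying \eqref{eq1} and \eqref{eq2}. Summing the capacity constraints \eqref{eq2} over all servers $v\in[n]$ and discarding the nonnegative contributions of objects outside $I$ gives
\begin{equation*}
\sum_{i\in I}\sum_{j=1}^{t_i}\lambda_{i,j}\,|R_{i,j}|\;\le\; n .
\end{equation*}
The plan is to bound each $|R_{i,j}|$ from below in terms of $a_i$ and $b_i$, and then use \eqref{eq1} to convert the left-hand side into a linear expression in the $\lambda_i$.

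For each $i\in I$, we split the recovery sets of $u_i$ into the smallest set $\mathcal{O}_i$ itself, say $R_{i,1}=\mathcal{O}_i$, and all other minimal recovery sets. By Lemma \ref{lem:minRsize}, every minimal recovery set $R_{i,j}\neq\mathcal{O}_i$ has $|R_{i,j}|\ge b_i$, while $|\mathcal{O}_i|=a_i$. Hence
\begin{equation*}
\sum_{j}\lambda_{i,j}|R_{i,j}|\;\ge\; a_i\lambda_{i,1}+b_i(\lambda_i-\lambda_{i,1})\;=\;b_i\lambda_i-(b_i-a_i)\lambda_{i,1}.
\end{equation*}

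The main remaining step is to bound the rate $\lambda_{i,1}$ routed through $\mathcal{O}_i$. Since every server in $\mathcal{O}_i$ has unit capacity and $\mathcal{O}_i$ is nonempty, any server $v\in\mathcal{O}_i$ carries load at least $\lambda_{i,1}$, so \eqref{eq2} gives $\lambda_{i,1}\le 1$. Because $b_i-a_i\ge 0$ by definition, we obtain $\sum_j\lambda_{i,j}|R_{i,j}|\ge b_i\lambda_i-(b_i-a_i)$. Summing over $i\in I$ and combining with the first display yields $\sum_{i\in I}\lambda_i b_i\le n+\sum_{i\in I}(b_i-a_i)$, which is \eqref{Ibound}.

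The one point needing care is that $\mathcal{O}_i$ may be only one of several smallest recovery sets of the same size $a_i$. Lemma \ref{lem:minRsize} still applies to each other set $R$ relative to the fixed $\mathcal{O}_i$, because it only uses $R\neq\mathcal{O}_i$ together with the minimality of both sets. It gives $|R|\ge d^\perp-a_i$ and $|R|\ge a_i$, hence $|R|\ge b_i$. The case where another set of size $a_i$ exists forces $a_i\ge d^\perp-a_i$, so $b_i=a_i$ and the bound is consistent. Thus no further case split is needed, and the argument is the same over any $\mathbb{F}_q$.
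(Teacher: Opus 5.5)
Your proof is correct and follows the paper's own argument. You split $\lambda_i$ into the rate routed through $\mathcal{O}_i$ (at most $1$) and the remainder (carried by sets of size at least $b_i$, by Lemma~\ref{lem:minRsize}), then compare the total load with the aggregate capacity $n$. You also state explicitly two points the paper leaves implicit: $\lambda_{i,1}\le 1$ follows from the unit capacity of any server in $\mathcal{O}_i$, and $b_i-a_i\ge 0$ is what allows $\lambda_{i,1}$ to be replaced by $1$.
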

\begin{proof}  For each $i\in I,$ decompose the service rate as
$\lambda_i=x_i+y_i,
$
where $x_i\leq 1$ denotes the rate assigned to the recovery set $\mathcal O_i,$ and $y_i=\lambda_i-x_i$ is the total rate assigned to all other recovery sets for $u_i.$
By Lemma \ref{lem:minRsize}, every minimal recovery set for $u_i$ distinct from $\mathcal{O}_i$ has cardinality at least $b_i$. Since the total service load cannot exceed the aggregate capacity $n$ of the servers, we obtain
$$ \sum_{i\in I}x_ia_i+\sum_{i\in I}(\lambda_i-x_i)b_i\leq n,$$ 
which implies
$$ \sum_{i\in I}\lambda_ib_i\leq n+\sum_{i\in I}x_i(b_i-a_i).$$   
 Since $x_i\leq 1,$ we get 
$$ \sum_{i\in I}\lambda_ib_i\leq n+\sum_{i\in I}(b_i-a_i).$$

\end{proof}
\begin{remark}
    When $d^\perp \le 2a_i$ for all $i\in I$, we have $b_i=a_i$ for all $i$. The bound collapses to  $\sum_{i\in I}a_i\lambda_i\le n$. Tightness then requires only that the recovery sets of minimum size can be weighted so that all $n$ servers are exactly saturated.
\end{remark}
\begin{example}
  Consider the $r$th-order binary Reed–Muller code $\mathrm{RM}(r,m)$ of length $2^m,$ where $m\geq r+1,$ represented with respect to the monomial basis (see \cite[Chap. 13]{Sloane}). For a data object $u_i$ of order $r$ (as defined by Equation (6) of \cite{LyHS}), the cardinality of a smallest recovery set is $a_i=2^r$ \cite[Thm. 4]{LyHS}. Moreover, since $d^\perp=2^{r+1},$ we have
$b_i=\max\{a_i,d^\perp-a_i\}
=\max\{2^r,2^{r+1}-2^r\}
=2^r.$
Let $I_r\subseteq [m+1]$ denote the index set corresponding to the data objects of order $r.$ Applying \eqref{Ibound} to $I_r,$ we obtain
$\sum\limits_{i\in I_r}2^r\lambda_i\leq 2^m,$
and hence
$\sum\limits_{i\in I_r}\lambda_i\leq 2^{m-r}.$
On the other hand, Theorem 7 of \cite{LyHS} shows that the maximum aggregate service rate for the data objects indexed by $I_r$ is exactly $2^{m-r}.$ Therefore, the upper bound obtained from \eqref{Ibound} is tight for this class of data objects in $\mathrm{RM}(r,m).$
\end{example}
As an immediate consequence of the preceding theorem, the following corollary establishes an explicit upper bound on the sum of service rates achievable by two distinct data objects with disjoint smallest recovery sets.
\begin{corollary}\label{corollary1}
Under the assumptions of Theorem \ref{thm1}, let $I=\{i,j\}\subseteq[k],$ where $i\neq j$ and  $\mathcal{O}_i\cap\mathcal{O}_j=\emptyset.$  
Then the sum of achievable rates
$\lambda_i$ and $\lambda_j$ for data objects $u_i$ and $u_j$, respectively, satisfies
\begin{equation}\label{2bound}
    (\lambda_i+\lambda_j)^*\leq\left\{ \begin{array}{cl}
   2+\frac{b_j-b_i}{b_i}+\frac{n-(a_i+a_j)}{b_i}  & \text{if } b_i\leq b_j;\vspace{1mm}\\
    2+\frac{b_i-b_j}{b_j}+\frac{n-(a_i+a_j)}{b_j}  & \text{if } b_i> b_j.\\
 \end{array}\right.\end{equation}
\end{corollary}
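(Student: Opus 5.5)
The plan is to specialize Theorem \ref{thm1} to $I=\{i,j\}$ and then trade the weighted sum $\lambda_ib_i+\lambda_jb_j$ for the plain sum $\lambda_i+\lambda_j$. By symmetry in $i$ and $j$, it suffices to treat the case $b_i\le b_j$; the case $b_i>b_j$ follows by swapping the roles of $i$ and $j$.

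\textbf{Step 1: specialize Theorem \ref{thm1}.} Fix any achievable demand vector $\boldsymbol{\lambda}\in\Lambda(\mathcal{R}(G))$. Applying \eqref{Ibound} with $I=\{i,j\}$ gives
\[
\lambda_ib_i+\lambda_jb_j\le n+(b_i-a_i)+(b_j-a_j).
\]

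\textbf{Step 2: lower-bound the left side.} Assume $b_i\le b_j$. Since $\lambda_j\ge 0$, we have $\lambda_jb_j\ge \lambda_jb_i$, so
\[
b_i(\lambda_i+\lambda_j)\le \lambda_ib_i+\lambda_jb_j .
\]
Here $b_i\ge a_i\ge 1$, so $b_i>0$. Dividing Step 1 by $b_i$ therefore yields
\[
\lambda_i+\lambda_j\le \frac{n-(a_i+a_j)+b_i+b_j}{b_i}.
\]

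\textbf{Step 3: rewrite in the stated form.} Using the identity $\frac{b_i+b_j}{b_i}=2+\frac{b_j-b_i}{b_i}$, the right-hand side becomes
\[
2+\frac{b_j-b_i}{b_i}+\frac{n-(a_i+a_j)}{b_i}.
\]
This bound holds for every achievable $\boldsymbol{\lambda}$. Taking the maximum over $\Lambda(\mathcal{R}(G))$ then gives the bound on $(\lambda_i+\lambda_j)^*$ in \eqref{2bound}. The symmetric case $b_i>b_j$ gives the other branch with $b_j$ in the denominators.

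\textbf{On the main obstacle.} No step is genuinely difficult; the only care needed is in Step 2, choosing to bound by the smaller of $b_i,b_j$ so that the direction of the inequality is correct. The hypothesis $\mathcal{O}_i\cap\mathcal{O}_j=\emptyset$ is not used in the derivation, because Theorem \ref{thm1} already holds without it. Its role is interpretive: it guarantees that the term $n-(a_i+a_j)$ counts the servers left over after both smallest recovery sets are fully loaded. I would add a brief remark to this effect, noting that each $\mathcal{O}_i$ can carry rate at most $1$ and that, with disjointness, both can simultaneously carry rate $1$.
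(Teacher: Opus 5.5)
Your proof is correct and matches the paper, which gives no separate argument and presents this corollary as an immediate consequence of Theorem~\ref{thm1}: specialize \eqref{Ibound} to $I=\{i,j\}$, bound $\lambda_ib_i+\lambda_jb_j$ from below by $\min\{b_i,b_j\}(\lambda_i+\lambda_j)$, and divide. You are also right that the disjointness hypothesis $\mathcal{O}_i\cap\mathcal{O}_j=\emptyset$ is never used in the derivation; it only explains why $n-(a_i+a_j)$ counts the servers left after both smallest recovery sets carry rate $1$.
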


\begin{example}
    Consider an $[n,k=3,n-2]$ MDS code $\mathscr{C}$ over $\mathbb{F}_q,$ where $n=q+2$ when $q$ is even and $n=q+1$ when $q$ is odd. 
    Giusto {\etal} \cite{oval} showed that
    there exists a non-systematic generator matrix for $\mathscr{C}$ such that all three data objects have the smallest recovery sets of size $2$, and for any two distinct  data objects $u_i$ and $u_j$, there exist recovery sets of size $2$ that are disjoint.    Then, by the  upper bound  \eqref{2bound}, we get $(\lambda_i+\lambda_j)^*\leq \frac{n}{2}$ for all $i\ne j\in[3],$ which agrees with Theorem 13 of  \cite{oval}. 
\end{example}

We next consider the case in which the smallest recovery sets for two distinct data objects are not disjoint. In the following corollary, we derive an explicit bound on the maximum achievable sum of the service rates of two data objects when these smallest recovery sets have a nonempty intersection.

\begin{corollary}\label{thm2}
Under the assumptions of Theorem \ref{thm1}, let $I=\{i,j\}\subseteq[k],$ where $i\neq j$ and  $\mathcal{O}_i\cap\mathcal{O}_j\neq\emptyset.$  
Then the sum of achievable rates
$\lambda_i$ and $\lambda_j$ for objects $u_i$ and $u_j$, respectively, satisfies
$$(\lambda_i+\lambda_j)^*\leq\left\{ \begin{array}{cl}
   1+\frac{n-a_i}{b_i}  & \text{if } b_i\leq b_j;\vspace{1mm}\\
    1+\frac{n-a_j}{b_j}  & \text{if } b_i> b_j.\\
 \end{array}\right.$$
\end{corollary}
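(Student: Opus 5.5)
The plan is to repeat the load-counting argument from the proof of Theorem~\ref{thm1}, adding one extra constraint that comes from the shared server. Without loss of generality assume $b_i\le b_j$; the other case is symmetric. As in that proof, I write $\lambda_i=x_i+y_i$ and $\lambda_j=x_j+y_j$. Here $x_i$ is the rate routed through $\mathcal O_i$, $x_j$ is the rate routed through $\mathcal O_j$, and $y_i,y_j$ collect the rates on all other minimal recovery sets. By Lemma~\ref{lem:minRsize}, every minimal recovery set for $u_i$ other than $\mathcal O_i$ has at least $b_i$ servers, and similarly for $u_j$. Summing the server loads \eqref{eq2} over all $n$ servers therefore gives
\begin{equation*}
x_ia_i+x_ja_j+y_ib_i+y_jb_j\le n .
\end{equation*}

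The new ingredient is the hypothesis $\mathcal O_i\cap\mathcal O_j\neq\emptyset$. Pick a server $v$ in the intersection. Every unit of rate sent through $\mathcal O_i$ or through $\mathcal O_j$ loads $v$, so the capacity constraint at $v$ gives $x_i+x_j\le 1$. This replaces the two separate constraints $x_i\le 1$ and $x_j\le 1$ used in Theorem~\ref{thm1}, and it is what removes the extra ``$+1$'' and the $a_j$ contribution that appear in Corollary~\ref{corollary1}.

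Next I combine the two inequalities. Since $b_i\le b_j$, we have
\begin{equation*}
b_i(\lambda_i+\lambda_j)\le b_i(x_i+x_j)+y_ib_i+y_jb_j\le n+x_i(b_i-a_i)+x_j(b_i-a_j).
\end{equation*}
The claimed bound $1+\frac{n-a_i}{b_i}$ is equivalent to $b_i(\lambda_i+\lambda_j)\le n+b_i-a_i$. So it remains to show that the linear form $x_i(b_i-a_i)+x_j(b_i-a_j)$ is at most $b_i-a_i$ on the region $x_i,x_j\ge 0$, $x_i+x_j\le 1$. Its maximum is attained at a vertex, so it equals $\max\{0,\,b_i-a_i,\,b_i-a_j\}$. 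Because $b_i\ge a_i$, this maximum is $b_i-a_i$ whenever $a_j\ge a_i$, and the proof is complete in that case.

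The main obstacle is the remaining case $a_j<a_i$ with $b_j\ge b_i$. There the vertex $(x_i,x_j)=(0,1)$ gives $b_i-a_j$, which can exceed $b_i-a_i$. My first attempt will be to show this case cannot occur, or is harmless, using the definitions $b_\ell=\max\{a_\ell,d^\perp-a_\ell\}$. Suppose $a_j<a_i\le b_i\le b_j$. If $b_j=a_j$, then $b_j<a_i\le b_i$, a contradiction. Hence $b_j=d^\perp-a_j>d^\perp-a_i$, and I will check whether this forces $b_i-a_j\le b_i-a_i$ or otherwise yields the bound directly. If that fails, the fallback is to avoid replacing $b_j$ by $b_i$ for the $x_j$ portion: keep the term $x_jb_j$ and instead bound $b_i\lambda_j\le b_jy_j+b_ix_j$ more carefully, exploiting the slack $b_j-b_i\ge 0$. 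If neither approach works, I will state the corollary under the additional ordering assumption $a_i\le a_j$, which is the natural reading when the smallest recovery sets are indexed consistently with $b_i\le b_j$.
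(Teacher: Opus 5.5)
There is a genuine gap: the case $a_j<a_i$, $b_i\le b_j$ that you flag cannot be closed, because the corollary is false there. Your reduction is right as far as it goes. The shared server gives $x_i+x_j\le 1$, the summed load gives $b_i(\lambda_i+\lambda_j)\le n+x_i(b_i-a_i)+x_j(b_i-a_j)$, and this settles $a_i\le a_j$. The paper states the corollary without proof, so there is no argument there to compare against. Your first two routes (excluding the case, or exploiting the slack $b_j-b_i$) cannot succeed, as the following counterexample shows.

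Take $\mathscr{C}=\{\mathbf{c}\in\mathbb{F}_3^5:\sum_\ell c_\ell=0\}$ with
\[
G=\begin{pmatrix}1&0&1&0&1\\0&0&2&0&1\\0&1&2&0&0\\0&0&0&1&2\end{pmatrix}.
\]
Then $\mathscr{C}^\perp=\langle(1,1,1,1,1)\rangle$ and $d^\perp=5$. The coefficient vectors expressing $\mathbf{e}_\ell$ in the columns form a coset of $\mathscr{C}^\perp$, and the minimal recovery sets are the inclusion-minimal supports in that coset.

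\begin{itemize}
\item Since $G^{(1)}=\mathbf{e}_1$, the minimal recovery sets of $u_1$ are $\{1\}$ and $\{2,3,4,5\}$. So $a_1=1$ and $b_1=4$.
\item Since $\mathbf{e}_2=G^{(2)}+G^{(3)}+2G^{(4)}+2G^{(5)}$, the coset $(0,1,1,2,2)+\mathscr{C}^\perp$ shows that the minimal recovery sets of $u_2$ are exactly $\{1,2,3\}$, $\{1,4,5\}$ and $\{2,3,4,5\}$. So $a_2=3$ and $b_2=3$.
\item Every smallest recovery set of $u_2$ contains server $1$, so the intersection hypothesis holds for any choice of $\mathcal{O}_2$.
\end{itemize}

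With $i=2$, $j=1$ we have $a_j<a_i$ and $b_i=3<4=b_j$. The corollary then asserts $(\lambda_1+\lambda_2)^*\le 1+\frac{5-3}{3}=\frac{5}{3}$. However, serve $u_1$ through $\{1\}$ and $u_2$ through $\{2,3,4,5\}$, each at rate $1$. Every server carries load exactly $1$, so $(1,1,0,0)\in\Lambda(\mathcal{R}(G))$ and $(\lambda_1+\lambda_2)^*\ge 2$. This is exactly your vertex $(x_i,x_j)=(0,1)$: the cheap set $\mathcal{O}_j$ runs at full rate, and $u_i$ is served entirely away from $\mathcal{O}_j$.

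So your vertex computation is sharp. What the shared-server argument proves, and what holds in general, is
\[
(\lambda_i+\lambda_j)^*\le 1+\frac{n-\min\{a_i,a_j\}}{\min\{b_i,b_j\}}.
\]
This agrees with the stated bound exactly when the index with the smaller $b$ also has the smaller or equal $a$. Your third fallback, assuming $a_i\le a_j$ or equivalently replacing $a_i$ by $\min\{a_i,a_j\}$, is the correct repair. It is a correction of the statement, though, not a proof of it.
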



We next specialize Theorem \ref{thm1} to the case of a systematic generator matrix. In this setting, the singleton $\{i\}$ is a smallest recovery set for each data object $u_i,$ so that $a_i=1$ for every $i\in I.$ Consequently, when $d^\perp>1,$ Theorem \ref{thm1} directly yields the following upper bound on the maximum achievable aggregate service rate for any subset $I\subseteq[k].$

\begin{theorem}\label{thm:Upperbound}
 Let $\mathscr{C}$ be an $[n,k,d]$ linear code over $\mathbb{F}_q$ with a systematic generator matrix $G.$ Consider a storage system based on encoding defined by $G.$   Then, for any subset $I\subseteq[k],$ provided that $d^\perp>1$, we have  $$\big(\sum_{i\in I}\lambda_i\big)^*\leq |I|+\frac{n-|I|}{d^\perp-1}.$$
\end{theorem}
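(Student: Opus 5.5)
The plan is to specialize Theorem \ref{thm1} to the systematic setting, after first pinning down the values of $a_i$ and $b_i$. Since $G$ is in systematic form, the $i$-th column of $G$ is $\mathbf{e}_i$ for each $i\in[k]$. Hence the singleton $\{i\}$ is a recovery set for $u_i$. It is minimal, and no recovery set can be smaller, because the empty set recovers nothing and $\mathbf{e}_i\neq \mathbf{0}$. So we may take $\mathcal{O}_i=\{i\}$ and $a_i=1$ for every $i\in I$. Then
\[
b_i=\max\{1,d^\perp-1\}=d^\perp-1,
\]
which uses the hypothesis $d^\perp>1$, i.e.\ $d^\perp\ge 2$. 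In particular $b_i$ is the same for all $i\in I$, say $b=d^\perp-1\ge 1$.

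Next, substitute these values into \eqref{Ibound}. For any achievable $\boldsymbol{\lambda}\in\Lambda(\mathcal{R}(G))$ this gives
\[
(d^\perp-1)\sum_{i\in I}\lambda_i\le n+|I|(d^\perp-2).
\]
Dividing by $d^\perp-1>0$ yields
\[
\sum_{i\in I}\lambda_i\le \frac{n+|I|(d^\perp-1)-|I|}{d^\perp-1}=|I|+\frac{n-|I|}{d^\perp-1}.
\]
This bound holds for every achievable demand vector, so it also holds for the maximum, which gives the stated bound on $\big(\sum_{i\in I}\lambda_i\big)^*$.

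The only step needing care is justifying the input to Theorem \ref{thm1}, namely $x_i\le 1$ and that every other minimal recovery set has size at least $b_i$. Here $x_i\le 1$ holds because server $i$ has unit capacity and carries the load $x_i$ from $\mathcal{O}_i=\{i\}$. The size bound is exactly Lemma \ref{lem:minRsize} with $a=1$. The positivity of $d^\perp-1$, guaranteed by $d^\perp>1$, is what permits the division. I expect no real obstacle beyond this bookkeeping.
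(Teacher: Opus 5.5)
Your proof is correct and follows essentially the same route as the paper. Both take $\mathcal{O}_i=\{i\}$ with $a_i=1$, use Lemma~\ref{lem:minRsize} to get $|R|\ge d^\perp-1$ for every other minimal recovery set, and compare the total load with the capacity $n$; the paper writes this load count out directly instead of substituting $a_i=1$, $b_i=d^\perp-1$ into \eqref{Ibound}. Going through Theorem~\ref{thm1} with $x_i\le 1$ is, if anything, slightly more careful: the paper writes the leftover rate as $\lambda_i-1$, which tacitly relies on $(d^\perp-2)\bigl(|I|-\sum_{i\in I}x_i\bigr)\ge 0$, whereas your version handles this automatically.
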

\begin{proof}
 Since $G$ is systematic, for each $i\in I,$ the singleton set $\{i\}$ is a smallest recovery set for $u_i.$ Moreover, by Lemma \ref{eq:max-bound}, every other minimal recovery set for $u_i$ has cardinality at least $d^\perp-1.$
To obtain an upper bound on the aggregate service rate, assign at most one unit of service to the systematic recovery set $\{i\}$  for each $i\in I.$ The remaining rate, $\lambda_i-1,$ must then be served through recovery sets of size at least $d^\perp-1.$ Since the systematic recovery sets $\{i\},$ $i\in I,$ are pairwise disjoint, serving one unit through each of them consumes a total server capacity of $|I|.$ Therefore, the total server load satisfies 
$$|I|+(d^\perp-1)\sum_{i\in I}(\lambda_i-1)\leq n,$$ which implies that $$\sum_{i\in I}\lambda_i\leq |I|+\frac{n-|I|}{d^\perp-1}.$$
\end{proof}
\begin{remark} The upper bound assertion of Theorem \ref{thm:Upperbound} follows from First Dual Distance Bound  (Theorem IV.6 of Alfarano {\etal}
    \cite{Alfarano}).  Indeed, restricting the demand vector to the coordinates indexed by $I,$ \textit{i.e.}, setting $\lambda_j=0$ for all 
    $j\in[k]\setminus I$ and $\min\{\lambda_i,1\}=1$ for all $i\in I,$ reduces their upper bound on the aggregate service rate to the upper bound established in Theorem \ref{thm:Upperbound}.

\end{remark}
 \begin{remark}
The bound established in Theorem~\ref{thm:Upperbound} can also be read backward, as a necessary condition on the dual distance. Let $I\subseteq[k]$ and let $S>|I|$
be a target aggregate service rate for the objects indexed by $I$. If a code with
a systematic generator matrix and $d^{\perp}\ge2$ supports $S$, then
\[
  S\;\le\;|I|+\frac{n-|I|}{d^{\perp}-1}
  \qquad\Longleftrightarrow\qquad
  d^{\perp}\;\le\;1+\frac{n-|I|}{S-|I|},
\]
equivalently $n\ge|I|+(S-|I|)(d^{\perp}-1)$. The direction of this inequality is
worth noting. A large dual distance forces every non-trivial recovery set to be
large, and large recovery sets consume more capacity per unit of service, so the
achievable aggregate rate decreases as $d^{\perp}$ grows; the two design
objectives of error correction and of service therefore pull against each other.
In the extreme case $d^{\perp}=2$, the code has repeated columns, every recovery
set has size one, and the constraint degenerates to $S\le n$. Consequently, among
codes of length $n$ supporting the target $S$ on $I$, the dual distance cannot
exceed $1+\frac{n-|I|}{S-|I|}$; combined with the Singleton or Griesmer bound this
restricts the admissible triples $(n,k,S)$.

\end{remark}

We next investigate conditions under which the upper bound established in the preceding theorem is attained. In the following theorem, we show that this bound is achievable when the supports of the minimum weight codewords of the dual code form a suitable combinatorial design.

\begin{theorem}\label{thm:design}
Let $\mathscr{C}$ be an $[n,k,d]$ linear code over $\mathbb{F}_q$ with a systematic generator matrix $G,$ and  let $d^{\perp}\ge 2.$  Consider a storage system based on encoding defined by $G.$  Let $I\subseteq[k]$ be a subset of cardinality $t,$ where $t\leq \min \{d^\perp-1, n-d^\perp+1\}.$ 
Suppose that the supports of the minimum weight codewords of $\mathscr{C}^{\perp}$, counted without repetition, form a $(t+1)$-$(n,d^\perp,\mu_{t+1})$ design $\mathcal{X}=([n],\mathscr{B}).$  
Then there exist a demand vector $\boldsymbol\lambda=(\lambda_1,\dots,\lambda_k)\in \Lambda(\mathcal{R}(G))$  satisfies
\begin{equation}\label{eq:design-sum}
 \sum_{i\in I}\lambda_{i}
  ~=~t+\frac{n-t}{d^{\perp}-1}.
\end{equation}
\end{theorem}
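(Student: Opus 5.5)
The plan is to construct an explicit allocation that saturates every server, using only two kinds of recovery sets: the systematic singletons $\{i\}$ for $i\in I$, and the sets $B\setminus\{i\}$ for blocks $B\in\mathscr{B}$ that meet $I$ in exactly $\{i\}$. The upper bound of Theorem~\ref{thm:Upperbound} then shows that the value obtained is the maximum.

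\textbf{Step 1: these sets are minimal recovery sets.} Let $B=Supp(\mathbf{h})$ for a minimum weight codeword $\mathbf{h}\in\mathscr{C}^\perp$ with $i\in B$. Since $G^{(i)}=\mathbf{e}_i$, the relation $\sum_{j\in B}h_jG^{(j)}=0$ expresses $\mathbf{e}_i$ as a combination of the columns indexed by $B\setminus\{i\}$, so $B\setminus\{i\}$ is a recovery set for $u_i$ of size $d^\perp-1$. Suppose a proper subset $R'\subsetneq B\setminus\{i\}$ also recovered $u_i$. Then $\{i\}\cup R'$ would support a nonzero dual codeword of weight less than $d^\perp$, a contradiction. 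Hence $B\setminus\{i\}$ is minimal. For fixed $i$, distinct blocks give distinct recovery sets. The singleton $\{i\}$ is trivially a minimal recovery set.

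\textbf{Step 2: the allocation.} For each $i\in I$, put rate $1$ on $\{i\}$. For every block $B$ with $B\cap I=\{i\}$, put a common weight $w$ on $B\setminus\{i\}$; all other $\lambda_{i,j}$ and all $\lambda_\ell$ with $\ell\notin I$ are set to $0$.

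We compute the server loads with Lemma~\ref{twoset}.
\begin{itemize}
\item A server $j\in I$ carries load exactly $1$. The sets $B\setminus\{i\}$ contain neither $i$ nor any other element of $I$.
\item Fix a server $v\notin I$. For each $i\in I$, count the blocks containing $\{i,v\}$ and avoiding $I\setminus\{i\}$. Here $a=2$ and $b=t-1$, so $a+b=t+1$ and Lemma~\ref{twoset} applies to the $(t+1)$-design. The count is $\mathcal{M}_2^{t-1}$, independent of $v$ and $i$.
\item Hence every $v\notin I$ carries load $w\,t\,\mathcal{M}_2^{t-1}$. Choosing $w=1/(t\,\mathcal{M}_2^{t-1})$ makes every server carry load exactly $1$.
\end{itemize}

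\textbf{Step 3 (main obstacle): the counts are positive.} By Lemma~\ref{twoset},
\[
\mathcal{M}_2^{t-1}=\mu_{t+1}\binom{n-t-1}{d^\perp-2}\Big/\binom{n-t-1}{d^\perp-t-1}.
\]
This is positive exactly when $n-t-1\ge d^\perp-2$, i.e.\ $t\le n-d^\perp+1$. The hypothesis $t\le d^\perp-1$ ensures that the binomial coefficients in Lemma~\ref{twoset} are meaningful, and it also gives $\mathcal{M}_1^{t-1}>0$. One also needs $\mu_{t+1}>0$, i.e.\ that the design has at least one block; this is automatic since $\mathscr{C}^\perp$ has minimum weight codewords.

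\textbf{Step 4: the aggregate rate.} Each $i\in I$ receives
\[
\lambda_i=1+w\,\mathcal{M}_1^{t-1},
\]
where $\mathcal{M}_1^{t-1}$ counts the blocks containing $i$ and avoiding $I\setminus\{i\}$ (here $a+b=t$). Summing over $i\in I$ gives
\[
\sum_{i\in I}\lambda_i=t+\frac{\mathcal{M}_1^{t-1}}{\mathcal{M}_2^{t-1}}
=t+\frac{\binom{n-t}{d^\perp-1}}{\binom{n-t-1}{d^\perp-2}}
=t+\frac{n-t}{d^\perp-1}.
\]
This is \eqref{eq:design-sum}. The demand vector lies in $\Lambda(\mathcal{R}(G))$ because all rates are nonnegative and no server exceeds load $1$. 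By Theorem~\ref{thm:Upperbound}, this value is the maximum achievable aggregate rate for $I$.
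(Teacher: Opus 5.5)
Your proof is correct and follows the paper's argument essentially step for step: you put rate $1$ on each systematic singleton and a common weight $1/(t\mathcal{M}_2^{t-1})$ on each $B\setminus\{i\}$ with $B\cap I=\{i\}$, use Lemma~\ref{twoset} to show every server outside $I$ is exactly saturated, and finish with $\mathcal{M}_1^{t-1}/\mathcal{M}_2^{t-1}=(n-t)/(d^\perp-1)$. Your explicit checks that $B\setminus\{i\}$ is a \emph{minimal} recovery set, and that $\mathcal{M}_2^{t-1}>0$ precisely because $t\le n-d^\perp+1$, fill in details the paper leaves implicit. One small slip, which does not affect the argument since that positivity is never used: $\mathcal{M}_1^{t-1}>0$ also follows from $t\le n-d^\perp+1$, not from $t\le d^\perp-1$.
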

\begin{proof}
 By Theorem 8.1.6 of \cite{Huffman},  it follows that each server from $[n]$  occurs in exactly $\frac{\mu_{t+1}(n-1)}{d^{\perp}-1}$ blocks.  Now fix $i \in I,$ and let $B\in \mathscr{B}$ such that $i\in B.$ It is easy to see that there exists a codeword $\mathbf{h}=(h_1,h_2,\ldots, h_n)\in \mathscr{C}^{\perp}$ such that $Supp(\mathbf{h})=B$ and $h_i\neq0.$ This implies that $\sum\limits_{j\in B}h_j G^{(j)}=0,$ which implies that  $
  G^{(i)}=\mathbf e_i= -\sum_{j\in B\setminus\{i\}}\bigl(h_{i}^{-1}h_j\bigr)G^{(j)}.$ 
Hence,  $B\setminus\{i\}$ is a
recovery set of size $d^{\perp}-1$ for the data object $u_{i}$.

To prove the result, we first count the number of blocks containing the node $i$ and none of the nodes from $I\setminus \{i\}$. Take $S_1=\{i\}$ and $T_1=I\setminus \{i\}.$ Note that $S_1\cap T_1=\emptyset.$ Then by Lemma \ref{twoset},  we see that the number of blocks in $\mathscr{B}$ that contain $S_1$ and contain no point of $T_1$ is 
$$ \mathcal{M}_{1}^{t-1}
=\mu_{t+1}
\frac{\binom{n-t}{d^{\perp}-1}}
     {\binom{n-t-1}{d^{\perp}-t-1}}.$$
Next, fix a node $v\in[n]\setminus I.$ Now,  we count the number of blocks containing the nodes $i$ and $v,$  and none of the nodes from $I\setminus \{i\}.$ Take $S_2=\{i,v\}$ and $T_2=I\setminus \{i\}.$ Note that $S_2\cap T_2=\emptyset.$ Then again,  by Lemma \ref{twoset},  we see that the number of blocks in $\mathscr{B}$ that contain $S_2$ and contain no point of $T_2$ is $$ \mathcal{M}_{2}^{t-1}
=\mu_{t+1}
\frac{\binom{n-t-1}{d^{\perp}-2}}
     {\binom{n-t-1}{d^{\perp}-t-1}}.$$ 

We now construct a demand vector $\boldsymbol\lambda=(\lambda_1,\dots,\lambda_k)\in \Lambda(\mathcal{R}(G))$ that   satisfies the equation \eqref{eq:design-sum}. 

For each $\ell\in I,$ we assign the rate $1$ to the systematic recovery set $\{\ell\}$,
and assign the rate $\dfrac{1}{t\mathcal{M}_2^{t-1}}$ to each of the $\mathcal{M}_1^{t-1}$ recovery sets of $u_{\ell}$ of the form 
$B\setminus\{\ell\},$ where $B\in\mathscr{B}$ and $B\cap I=\{\ell\}$. All other
recovery sets of $u_{\ell}$ receive a rate $0$. Recall that for each $v\in [n]\setminus I,$ the number of blocks that contain $\ell$ and $v$ 
 (\textit{i.e.}, recovery sets for $u_{\ell}$ that contain  the node $v$) and none of the elements of $I\setminus\{\ell\}$ is $\mathcal{M}_2^{t-1}.$ Summing over all data objects belonging to $I,$  the total load
on the  $v$-th node  is $(\mathcal{M}_2^{t-1}t)\cdot\frac{1}{\mathcal{M}_2^{t-1}t}=1$. Thus,
every server carries a load exactly $1$, the allocation is feasible, and we get 

\begin{equation*}
 \sum_{i\in I}\lambda_{i}
  = t+\frac{t\mathcal{M}_1^{t-1}}{t\mathcal{M}_2^{t-1}}= t+\frac{n-t}{d^{\perp}-1}.
\end{equation*}
\end{proof}

\begin{remark}
 If the supports of the minimum weight codewords of $\mathscr{C}^{\perp}$ form an $\alpha$-$(n,d^{\perp},\mu_{\alpha})$ design and $I\subseteq[k]$ satisfies $|I|=t<\alpha.$  It follows from Lemma \ref{lem2.1} that these supports also form $(t+1)$-$(n,d^{\perp},\mu_{t+1})$ design, where $\mu_{t+1}=
\mu_\alpha\cdot \frac{\binom{n-t-1}{\alpha-t-1}}{\binom{d^{\perp}-t-1}{\alpha-t-1}} .$ Consequently, Theorem \ref{thm:design} can be applied to determine the exact value of the maximum aggregate service rate for the data objects indexed by $I.$
\end{remark}

\begin{remark} Theorem \ref{thm:design}  generalizes Corollary 1 of Ly and Soljanin \cite{Ly}. Their result considers a single data object and shows that the corresponding service rate bound is attained when the supports of the minimum weight codewords of the dual code form a $2$-design. Theorem 4 extends this result to an arbitrary subset $I\subseteq [k]$ of cardinality $t,$ showing that the upper bound on the aggregate service rate of the data objects indexed by $I$ is attained when the supports of the minimum weight codewords of $\mathscr{C}^{\perp}$ form a $(t+1)$-design. In particular, taking $t=1$ recovers Corollary 1 of  Ly and Soljanin \cite{Ly}. 
\end{remark}


We next establish the tightness of the upper bound in Theorem \(\ref{thm:design}\) for systematic codes with explicitly characterized service rate regions (SRRs), specifically, first-order Reed–Muller codes and MDS codes.

\begin{enumerate}
\item[(a)] \textbf{First order Reed-Muller codes:} Let $m\geq 3$ be an integer. The dual of the first order binary Reed--Muller code $\mathrm{RM}(1,m)$ is $\mathrm{RM}(m-2,m)$, which is also the extended Hamming code with parameters $[2^m, 2^m-1-m, 4]$ (see Section 1.10 of \cite{Huffman}). By Theorem  15 of \cite[Chap. 2]{Sloane}, we note that the minimum Hamming weight codewords in $\mathrm{RM}(m-2,m)$ hold a $3$-$(2^m,4,1)$ design. Then, corresponding to a systematic generator matrix of $\mathrm{RM}(1,m)$, the sum of achievable service rates for any two  distinct data objects  $i,j\in[m+1]$, we get
$$\bigl(\lambda_i+\lambda_j\bigr)^{*}\;=\;\frac{2^{m}+4}{3}.$$
 When $m=3,$ by Theorem \ref{thm:design}, we get $\bigl(\lambda_i+\lambda_j\bigr)^{*}=4$  which is equivalent to the bound presented  in Theorem 3 of 
 Kazemi {\etal} \cite{Geo}, 
 
Moreover, for $m\geq 4$, by Theorem 3 of Kazemi {\etal} \cite{Geo},  the outer region containing SRR of $\mathrm{RM}(1,m)$ corresponding to a systematic generator matrix   is given by
\begin{equation}\label{reed}
    \sum_{i\in[m+1]\setminus I}\lambda_i
+\sum_{j\in I}(3\lambda_j-2)
\le 2^{m},
\qquad \text{ for all } I\subseteq[m+1].
\end{equation}
Consider $I=\{i,j\}$ and set $\lambda_\ell=0$ for all
$\ell\in [m+1]\setminus I$. By equation \eqref{reed}, we have
$$(3\lambda_i-2)+(3\lambda_j-2)\le 2^m, $$
that is, $$ \lambda_i+\lambda_j
\le
\frac{2^{m}+4}{3}. $$
Thus, when $|I|=2,$ the corresponding constraints of the SRR, characterized in \cite[Thm. 3]{Geo},  yield exactly the same inequality as the bound established in  Theorem \ref{thm:design}, and this bound is attainable. Equivalently, for every $I\subseteq[k]$ with $|I|=2,$ the constraint
$\sum_{i\in[m+1]\setminus I}\lambda_i+\sum_{j\in I}(3\lambda_j-2)\leq 2^{m+1}-1$
given in \cite[Thm. 3]{Geo} is tight.
\vspace{2mm}
\item[(b)] \textbf{MDS codes:}  Let $\mathscr{C}$ be an $[n,k,n-k+1]$ MDS code. Since $\mathscr{C}^\perp$ is also an MDS code, any $(k+1)$ elements subset of $[n]$ gives the support of a minimum weight codeword of $\mathscr{C}^\perp$ (see  Theorem $4$ of \cite[Chap. 11]{Sloane}). Then the supports of minimum weight codewords of $\mathscr{C}^\perp$ form the complete design $(k+1)$-$(n,\,k+1,\,1)$. By Lemma \ref{lem2.1}, we see that the supports of minimum weight codewords of $\mathscr{C}^\perp$ also form  $t$-design for any $t,$ where $1\le t\le k+1$.  Then for any $I\subseteq[k]$
with $|I|=t$ and 
  $t\ \le n- k$, we have 
\begin{equation*}
      \Bigl(\sum_{i\in I}\lambda_{i}\Bigr)^{*}= t+\frac{n-t}{d^{\perp}-1} =\;t+\frac{n-t}{k}.
\end{equation*}
Ly and Soljanin \cite{LyH} characterized the SRR of MDS codes by determining the complete service polytope associated with a generator matrix $G_\ell(n,k),$  where $\ell$ denotes the number of systematic servers under the condition $n\geq k+s.$ In \cite[Thm. 4]{LyH}, Ly and Soljanin consider the set $\mathcal{A}=\{j \in[\ell]~:~\lambda_j \geq 1\}$ and $\mathcal{B}=[\ell]\setminus \mathcal{A}.$  By taking $\ell=k,$ $\mathcal{A}=I(\subseteq [k])$   and $\lambda_j=0$ for all $j\in [k]\setminus I$  in \cite[Thm. 4]{LyH}, their result recovers precisely the bound obtained in Theorem \ref{thm:design} for the aggregate service rate of the data objects indexed by $I.$

In particular, for a systematic generator matrix $G_k(n,k),$ their characterization applies when $n\geq 2k.$ Our result, however, requires only $n\geq k+t,$ where $t=|I|.$ Thus, whenever $t<k,$ this condition is weaker than the requirement $n\geq 2k$ for the complete characterization of the systematic case in \cite{LyH}. Consequently, Theorem \ref{thm:design} also applies in the parameter range $k+t\leq n\leq 2k-2,$
for which the complete SRR is not known in general, as noted in \cite[Subsection E]{LyH}.  Nevertheless, in this range, Theorem 
\ref{thm:design}  determines the exact value of
$\left(\sum_{i\in I}\lambda_i\right)^*$
for every subset $I\subseteq[k]$ with  $|I|=t.$

The smallest parameter set illustrating this distinction is $(n,k,t)=(6,4,2).$ In this case, Theorem \ref{thm:design} gives the exact maximum aggregate service rate as $3,$ with all servers saturated at the optimum. In contrast, applying the vertex cover bound of \cite[Lemma 3]{LyH} yields the upper bound $4,$ which does not determine the exact optimum.


\end{enumerate}

\section{Applications: Codes Attaining the Upper Bound}\label{app}
In this section, we present additional families of codes for which the upper bound established in Theorem \ref{thm:design} is attained. 
\begin{enumerate}

\item[(I)] \textbf{Reed-Muller code $\mathrm{RM}(r-1,m)$:} Let $m$ and $r$ be positive integers satisfying $m\ge 3$ and $2\le r\le m-1$. It is well known that $\mathrm{RM}(r-1,m)^{\perp}=\mathrm{RM}(m-r,m)$ \cite[Chap. 13]{Sloane}. Further, we note that the minimum weight codewords of
$\mathrm{RM}(m-r,m)$ hold a
$3$-design (see page 400 of \cite{Sloane}). By Theorem \ref{thm:design}, for a systematic
generator matrix of $\mathrm{RM}(r-1,m),$  the
sum of achievable service rates for any two distinct data objects 
$i$ and $j$ is given by 
$$\bigl(\lambda_i+\lambda_j\bigr)^{*}
  =2+\frac{2^m-2}{2^{r}-1}=\frac{2^m+2^{r+1}-4}{2^{r}-1}.$$

\item[(II)] The extended binary Golay code with parameters $[24,12,8]$ is a self-dual code, and the supports of its minimum weight codewords hold a $5$-$(24, 8, 1)$ design (see Chapter 20 of \cite{Sloane}). Then, corresponding to a systematic generator matrix of this code, the bound on the cumulative service rate of data objects indexed by $I\subseteq[12]$, with $|I|=t\leq4$, is \[\sum_{i\in I}\lambda_i\leq t+\frac{24-t}{7}=\frac{24+6t}{7}. \]

 For $1\leq t\leq 4,$ the resulting values of $\mathcal{M}_1^{t-1},$ $\mathcal{M}_2^{t-1},$ the assigned weight $\tfrac{1}{t\mathcal{M}_2^{t-1}}$ to each node in $[24]\setminus I,$ the service rate $\lambda_i$ for each $i\in I,$ and the resulting aggregate service rate are summarized in TABLE \ref{table2} (recall the notations from the proof of Theorem \ref{thm:design} that for each $i\in I\subseteq [12]$,  $\mathcal{M}_1^{t-1}$ denotes the number of blocks containing $i$ and no point of $I\setminus\{i\}$, and, for each $v\in [24]\setminus I,$  $\mathcal{M}_2^{t-1}$ denotes the number of such blocks that additionally contain $v.$ Assigning weight $1$ to systematic node $i$ and weight equal to  $1/(t\mathcal{M}_2^{t-1})$ to each  recovery set of $u_i$ that is disjoint from $I$  yields a feasible allocation). 
\begin{table}[h]\label{table2}
\centering
\begin{tabular}{ccccccc}
\hline
$t$ & $\mathcal{M}_1^{t-1}$ & $\mathcal{M}_2^{t-1}$ &  $\tfrac{1}{t\mathcal{M}_2^{t-1}}$
    & $\lambda_i$ & $\sum_{i\in I}\lambda_i$ & Total load \\
\hline
$1$ & $253$ & $77$ & $1/77$  & $30/7\approx4.2857$ & $30/7\approx4.2857$ & $24$ \\
$2$ & $176$ & $56$ & $1/112$ & $18/7\approx2.5714$ & $36/7\approx5.1429$ & $24$ \\
$3$ & $120$ & $40$ & $1/120$ & $2$                 & $6$                 & $24$ \\
$4$ & $80$  & $28$ & $1/112$ & $12/7\approx1.7143$ & $48/7\approx6.8571$ & $24$ \\
\hline
\end{tabular}
\caption{Allocation data for the extended binary Golay code.}\label{table2}
\end{table}

\item[(III)] The extended ternary Golay code with parameters $[12,6,6]$ is a self-dual code, and  the supports of its minimum weight codewords hold a $5$-$(12, 6, 1)$ design (see Chapter 20 of \cite{Sloane}). Then, corresponding to a systematic generator matrix of this code, the bound on the aggregate service rate of data objects indexed by $I\subseteq[6]$, with $|I|=t\leq4$, is $$\sum_{i\in I}\lambda_i\leq t+\frac{12-t}{5}=\frac{12+4t}{5}. $$

For $1\leq t\leq 4,$ the resulting values of $\mathcal{M}_1^{t-1},$ $\mathcal{M}_2^{t-1},$ the assigned weight $\tfrac{1}{t\mathcal{M}_2^{t-1}}$ to each node in $[12]\setminus I,$ the service rate $\lambda_i$ for each $i\in I,$ and the resulting aggregate service rate are summarized in TABLE \ref{table1}. 
\textcolor{red}{
\begin{table}[h]
\centering
\begin{tabular}{ccccccc}
\hline
$t$ & $\mathcal{M}_1^{t-1}$ & $\mathcal{M}_2^{t-1}$ &  $\tfrac{1}{t\mathcal{M}_2^{t-1}}$
    & $\lambda_i$ & $\sum_{i\in I}\lambda_i$ & Total load \\
\hline
$1$ & $66$ & $30$ & $1/30$ & $16/5=3.2$ & $16/5=3.2$ & $12$ \\
$2$ & $36$ & $18$ & $1/36$ & $2$        & $4$         & $12$ \\
$3$ & $18$ & $10$ & $1/30$ & $8/5=1.6$  & $24/5=4.8$  & $12$ \\
$4$ & $8$  & $5$  & $1/20$ & $7/5=1.4$  & $28/5=5.6$  & $12$ \\
\hline
\end{tabular}
\caption{Allocation data for the extended ternary Golay code.}\label{table1}
\end{table}}

\item[(IV)] Let $q = 2^m$ with $m\ge 4$. Ding and Tang \cite[Thm. 34]{Tang} constructed a class of narrow sense BCH codes $\mathcal{D}$  over $\mathbb{F}_q$ with parameters $[q+1,q-5,d]$ and designed distance $4,$ where  
$d=6$ if $m$ is odd and $d=5$ if $m$ is even. In Theorem 35, they showed that $\mathcal{D}^{\perp}$ has parameters $[q+1,6,q-5].$ Further, in Theorem 37, they showed that  when $m\geq 5$ is odd,  the minimum weight codewords in  $\mathcal{D}$ and $\mathcal{D}^\perp$  support
 $4$-designs.   Then, corresponding to a systematic generator matrix of $\mathcal{D}$, the bound on the aggregate service rate of data objects indexed by $I\subseteq[q-5]$, with $|I|=t\leq 3$, is $$\sum_{i\in I}\lambda_i\leq t+\frac{q+1-t}{q-6}=\frac{q(t+1)-7t+1}{q-6}. $$

Further, corresponding to a systematic generator matrix of $\mathcal{D}^{\perp}$, the bound on the aggregate service rate of data objects indexed by $I\subseteq[6]$, with $|I|=t\leq 3$, is $$\sum_{i\in I}\lambda_i\leq t+\frac{q+1-t}{5}=\frac{q+1+4t}{5}.$$

\item[(V)]   Table \ref{tab:3designs} presents several classes of linear codes for which the supports of the minimum-weight codewords of their dual codes form $3$-designs. In particular, we consider families of NMDS, AMDS, cyclic,  BCH, and Steiner codes satisfying this design property. Throughout TABLE \ref{tab:3designs}, $\mathscr{C}$ denotes an $[n,k,d]$ linear code over $\mathbb{F}_q,$ while the entries marked with $\dagger$ correspond to $[n,k,d]$ linear codes over $\mathbb{F}_{q^2}.$ For each of these code families, we determine the exact value of $(\lambda_i+\lambda_j)^*$ for any two distinct data objects $u_i$ and $u_j,$ assuming that $\mathscr{C}$ is represented by a systematic generator matrix.  

\begin{table}[h]
\centering
\footnotesize
\setlength{\tabcolsep}{4pt}
\renewcommand{\arraystretch}{1.25}
\begin{tabular}{@{}llccccl@{}}
 \hline\multirow{3}{*}{}
\textbf{Family} & \textbf{Ref.} & \textbf{Result} &  \textbf{Field}&
\textbf{Parameter of}& $d^\perp$&$\bigl(\lambda_i+\lambda_j\bigr)^{*}$ \\ & & &  \textbf{size}&$\mathscr{C}$& & \\
\hline

\multicolumn{7}{@{}l}{\textit{(a) NMDS and AMDS families ($m\geq 4$)}}\\[2pt]\hline
\noalign{\vskip 2pt}
\multirow{2}{*}{NMDS}
 &  \cite{Ding} & Thm. $21$ & $q=3^m$
 & 
 $[q+1,q-3,4]$ & $q-3$
 & $\frac{3q-9}{q-4}$\vspace{1mm} \\& & Thm. $21$ & $q=3^m$
 &  
  $[q+1,4,q-3]$ & $4$ &
  $\frac{q+5}{3}$\vspace{1mm}\\\hline
  \noalign{\vskip 2pt}
\multirow{2}{*}{AMDS }
 &  \cite{Xu}&Thm.~$5$ & $q=p^m$
 &$[q+1, q-3, 4]$ & $q-p^{\ell}$ & $2+\frac{q-1}{ q-p^{\ell}-1}$ \vspace{1mm} \\   & & & 
 &$h<m$ \& $\gcd(h,m)=\ell$&  & \vspace{1mm}\\\hline
 \noalign{\vskip 2pt}
\multirow{2}{*}{NMDS }
 &  \cite{Xu} & Thm.~$7$ & $q=2^m$
 &$[q, 4, q-4]$  & $4$
 & $\frac{q+5}{3}$ \vspace{1mm} \\ & &Thm.~$7$ & $q=2^m$
 &$[q, q-4, 4]$& $ q-4$  & $\frac{3q-12}{q-5}$ \vspace{1mm}\\\hline
 
   \noalign{\vskip 2pt}
 
\multirow{2}{*}{NMDS}
 &\cite{Heng} & Thm.~$31$ & $q=2^m$
 &$[q, 5, q-5]$  & $5$
 & $\frac{q+6}{4}$ \vspace{1mm} \\ & &Thm.~$31$ & $q=2^m$
 &$[q, q-5, 5]$& $q-5$  & $\frac{3q-14}{q-6}$  \vspace{1mm}\\\hline
  \noalign{\vskip 2pt}
\multirow{2}{*}{NMDS}
 &\cite{Heng} & Thm.~$38$ \& $42$ & $q=2^m$
 &$[q, 6, q-6]$  & $6$
 & $\frac{q+8}{5}$ \vspace{1mm}  \\ & &Thm.~$38$  \& $42$ & $q=2^m$
 &$[q, q-6, 6]$& $q-6$  & $\frac{3q-16}{q-7}$ \vspace{1mm}\\\hline
  \noalign{\vskip 2pt}

AMDS$\dagger$
 &  \cite{Cao} & Thm.~$4$&$q=3^m$
 &$[q+1, q-5, 6]$   & $q-6$
 & $\frac{3q-15}{q-7}$ \\\hline
  \noalign{\vskip 2pt}
\multirow{2}{*}{NMDS$\dagger$} 
 &  \cite{Cao} & Thm.~$5$&$q=3^m$
 & $[q+1, 7, q-6]$   & $7$
 & $\frac{q+11}{6}$  \vspace{1mm} \\ & &Thm.~$5$ & $q=3^m$
 &$[q+1, q-6,7]$& $q-6$  & $\frac{3q-15}{q-7}$ \vspace{1mm}\\\hline
   \noalign{\vskip 2pt}

\multicolumn{7}{@{}l}{\textit{(b) Cyclic, BCH and Steiner families ($m\geq 2$)}}\\ [2pt] \hline
\noalign{\vskip 2pt}
 \multirow{3}{*}{Narrow-sense BCH }
 & \cite{Tang}& Thm.~$40$& $q=2^{2m+1}$
 & $[q+1,q-5, 5]$ &$q-5$
 & $\frac{3q-13}{q-6}$  \vspace{1mm} \\  & &Thm.~$40$& $q=2^{2m+1}$&$[q+1,6,q-5]$&  $ 5$& $\frac{q+7}{4}$\vspace{1mm}\\ & & & $m\geq 4$ even & & &\\\hline
  \noalign{\vskip 2pt}
\multirow{2}{*}{Antiprimitive BCH/AMDS}
  & \cite{Liu} & Thm.~$23$& $q=\delta^m$
 &  $[q+1,2\delta-2,q-2\delta+3]$ & $\delta+1$
 & $\frac{2\delta+q-1}{\delta}$ \vspace{1mm} \\ &&&$\delta$ is prime power&&\\\hline

   \noalign{\vskip 2pt}
 \multirow{2}{*}{Antiprimitive cyclic}
 & \cite{Xiang}& Thm.~$3$& $q=7^m$
 & $[q+1,4,q-7]$ &$4$
 & $\frac{q+5}{3}$  \vspace{1mm} \\ &&Thm.~$3$& $q=7^m$&$[q+1,q-3,4]$& $q-7$& $\frac{3q-17}{q-8}$\vspace{1mm}\\\hline
  \noalign{\vskip 2pt}
 \multirow{3}{*}{Cyclic }
 & \cite{Wang} & Thm.~$13$& $q=p^m$
 & $[q+1, 4,q-p^l]$  & $4$
 & $\frac{q+5}{3}$ \vspace{1mm} \\ & &Thm.~$13$& $q=p^m$
 &$[q+1, q-3, 4]$ &$q-p^l$ & $2+\frac{q-1}{ q-p^l-1}$\vspace{1mm}\\ & & & $p$ is odd  &$h<m$ \& $\gcd(h,m)=l$ & &\\\hline
  \noalign{\vskip 2pt}
 \multirow{2}{*}{ Negacyclic$\dagger$}
 & \cite{Wang} & Thm.~$22$&  $q\equiv 1\pmod{4}$
 &$[q^2+1, 4,q^2-q]$  & $4$
 &  $\frac{q^2+5}{3}$ \vspace{1mm} \\ & &Thm.~$22$& $q\equiv 1\pmod{4}$
 &$[q^2+1, q^2-3, 4]$ &$q^2-q$ & $\frac{3q^2-2q-3}{ q^2-q-1}$\vspace{1mm}\\\hline
  \noalign{\vskip 2pt}
 \multirow{2}{*}{ Negacyclic$\dagger$}
 & \cite{Wang} & Thm.~$32$& $q\equiv 3\pmod{4}$
 & $[q^2+1, 4,q^2-q]$  & $4$
 & $\frac{q^2+5}{3}$ \vspace{1mm}  \\ & &Thm.~$32$& $q\equiv 3\pmod{4}$
 &$[q^2+1, q^2-3, 4]$ &$q^2-q$ & $\frac{3q^2-2q-3}{ q^2-q-1}$\vspace{1mm}\\\hline
  \noalign{\vskip 2pt}
 \multirow{2}{*}{ Constacyclic$\dagger$}
 & \cite{Hu} & Thm.~$2$ \& $3$& $q=p^m$
 &$[q^2+1, 4,q^2-q]$  & $4$
 & $\frac{q^2+5}{3}$ \vspace{1mm} \\& & Thm.~$2$ \& $3$& $q=p^m$
 &$[q^2+1, q^2-3, 4]$  & $q^2-q$ 
 &$\frac{3q^2-2q-3}{ q^2-q-1}$\vspace{1mm}\\

\hline
\end{tabular}
\caption{Exact Values of $(\lambda_i+\lambda_j)^*$ for two distinct data objects in classes of linear codes whose dual codes have minimum-weight codewords supporting a $3$-design. Throughout, $p$ is a prime number and $\mathscr{C}$ denotes an $[n,k,d]$ linear code over $\mathbb{F}_q,$ while the entries marked with $\dagger$ correspond to $[n,k,d]$ linear codes over $\mathbb{F}_{q^2}.$ } 

\label{tab:3designs}
\end{table}

\end{enumerate}

\section{Conclusion and future work}\label{conclusion}

In this paper, we investigated aggregate service rates in coded distributed storage systems based on linear codes over arbitrary finite fields using the supports of the minimum weight codewords of the dual codes and combinatorial design theory. We derived upper bounds on the aggregate service rate for an arbitrary subset of data objects for both systematic and non-systematic generator matrices. In the systematic case, the resulting bound takes a simple form depending only on the code length, the dual distance, and the cardinality of the selected subset. This provides a subset-wise perspective on the First Dual Distance Bound and extends previously known results for the service rate of an individual data object. We further investigated conditions under which the derived bound is attained. Applications of these results to MDS, Reed–Muller, extended Golay, and other families of linear codes show that designs supported by minimum-weight dual codewords can yield exact aggregate service rates for subsets of data objects. A limitation of this approach is that the scarcity of known code families whose dual codes have minimum-weight codewords supporting $t$-designs of higher strength, particularly for $t\geq 4.$ Several interesting problems in this direction remain open. One natural direction is to derive tighter bounds on the SRR by incorporating additional structural information about the code and its recovery sets beyond the dual distance. Another important problem is to explicitly characterize the SRR for broader classes of linear codes, particularly cyclic, constacyclic, quasi-cyclic, and locally recoverable codes.


\section*{Acknowledgments}
 The  first author gratefully acknowledges the financial support provided by the Anusandhan National Research Foundation (ANRF), Government of India, under the NPDF scheme through Grant No. PDF/2025/000550. The second  author is grateful to the University Grants Commission (UGC), Government of India, for the financial support received under the UGC NET-SRF scheme.

{}
\end{document}